\renewcommand{\(}{\left(}
\renewcommand{\)}{\right)}
\renewcommand{\[}{\left[}
\renewcommand{\]}{\right]}
\newcommand{\Tr}[1]{{\rm{Tr}}\left(#1\right)}
\newcommand{\End}[1]{{\rm{End}}}
\renewcommand{\log}[1]{{\rm{log}}#1}
\newcommand{\diag}[1]{{\rm{diag}}\left\{#1\right\}}
\renewcommand{\vec}[1]{{\rm{vec}}\(#1\)}
\newtheorem{lemma}{Lemma}
\newtheorem{definition}{Definition}
\newtheorem{theorem}{Theorem}
\newtheorem{corollary}{Corollary}
\newtheorem{assm}{Assumption}
\newtheorem{rem}{Remark}
\newcommand{\norm}[1]{\left\lVert#1\right\rVert}
\DeclarePairedDelimiter{\ceil}{\lceil}{\rceil}
\begin{document}
\title{Error Bound for Compound Wishart Matrices}

\author{Ilya Soloveychik, \\ the Hebrew University of Jerusalem, Israel
\thanks{This work was partially supported Kaete Klausner Scholarship, the Hebrew University of Jerusalem.}}
\maketitle

\begin{abstract}
In this paper we consider non-asymptotic behavior of the real compound Wishart matrices that generalize the classical real Wishart distribution. In particular, we consider matrices of the form $\frac{1}{n}XBX^T$, where $X$ is a $p \times n$ matrix with centered Gaussian elements and $B$ is an arbitrary $n \times n$ matrix and sequences of such matrices for varying $n$. We show how the expectation of deviations from the mean can be bounded for compound Wishart matrices.
\end{abstract}

\begin{IEEEkeywords}
Compound Wishart distribution, correlated sample covariance matrix, concentration of Gaussian measure, sample complexity.
\end{IEEEkeywords}

\IEEEpeerreviewmaketitle

\section{Introduction}
For a real $p \times n$ matrix $X$ consisting of independent standard normally distributed elements the Wishart matrix, defined as $W = \frac{1}{n}XX^T$, was introduced by Wishart \cite{wishart1928generalised}, who also derived the law of its distribution. As evidenced by the wide interest among the scientists and engineers, the Wishart law is of primary importance to statistics, see e.g. \cite{anderson1958introduction, muirhead2009aspects}. In particular, Wishart law describes exactly the distribution of the sample covariance matrix in the Gaussian populations. As a natural generalization, the compound Wishart matrices were introduced by Speicher \cite{speicher1998combinatorial}.
\begin{definition} (Compound Wishart Matrix)
Let $X_i \sim \mathcal{N}(0,\Theta),\; i=1,\dots,n$, where $\Theta$ is $p \times p$ real positive definite matrix, and $B$ be an arbitrary real $n \times n$ matrix. We say that a random $p \times p$ matrix $W$ is Wishart with shape parameter $B$ and scale parameter $\Theta$ if
\begin{equation}
W = \frac{1}{n}X B X^T,
\label{wish_def}
\end{equation}
where $X = [X_1,\dots,X_n]$. We write $W \sim \mathcal{W}(\Theta, B)$.
\end{definition}
\begin{rem}
\label{rem}
It will sometimes be instructive to use the representation $W = \frac{1}{n}\Theta^{1/2} Y B X^T \Theta^{1/2}$, where $Y = [Y_1,\dots,Y_n]$ and $Y_i \sim \mathcal{N}(0,I)$.
\end{rem}

When B is positive definite, \cite{burda2011applying} interprets W as a sample covariance under correlated sampling. Similar interpretation for the complex case appears in \cite{chuah2002capacity}. The usual real Wishart matrices correspond to the choice $B = I$. Wishart distribution and its generalization to the case of positive definite matrix $B$ are widely used in economics, in particular in portfolio allocation theory \cite{collins2013compound}. 

When applying the representation theoretic machinery to the study of symmetries of normally distributed random vectors (see \cite{shah2012group} for an example of such settings) we encountered the case of compound Wishart distribution with a skew symmetric $B$ of the form (\ref{sk_sym}) and this problem motivated the current work.

Most of the literature concerning Wishart distribution deals with the asymptotic, $n \rightarrow \infty$, and the double asymptotic $n, p \rightarrow \infty$ regimes. In particular, a version of Marchenko-Pastur law was generalized to this case, see \cite{bryc2008compound} and references therein for a wide survey on the asymptotic behavior of the compound Wishat matrices.

In a different line of research in the recent years there were observed a few prominent achievements in the random matrix theory concerning the non-asymptotic regime. In particular, Bernstein inequality and other concentration inequalities \cite{tropp2012user, mackey2012matrix} were generalized to the matrix case using the Stein's method of exchangeable pairs and Lieb theorem. The finite sample performance of the sample covariance matrix was also profoundly investigated for a large class of distributions, see \cite{vershynin2012close, srivastava2011covariance, eldar2012compressed} and references therein. 

Closely related to the Wishart family of distributions is partial estimation of covariance matrices by the sample covariance in Gaussian models. In particular, matrices of the form $\frac{1}{n}M\cdot X X^T$, where $"\cdot"$ denotes the Hadamard (entry-wise) product of matrices, became attractive due to the advances in structured covariance estimation \cite{bickel2008covariance, bickel2008regularized, rothman2009generalized, cai2010optimal}. The matrix $M$ represents the a priory knowledge about the structure of the true covariance matrix in the form of a mask. The most widespread examples of such assumptions are banding, tapering and thresholding, which assume the elements of the mask $M$ belong to the interval $[0,1]$. The non-asymptotic behavior of such masked sample covariance matrices was investigated in \cite{levina2012partial, chen2012masked}. It has been shown that the sample complexity (the minimal number of samples need to achieve some predefined accuracy with stated probability) is proportional to $m \log^c{(2p)}$, where $m$ is a sparsity parameter of $M$ and $c \geq 1$. 

The main purpose of this paper is to demonstrate the non-asymptotic results concerning the average deviations of the compound Wishart matrices from their mean. We actually consider two closely related problems. 
\begin{itemize}
\item \text{\bf{Single Wishart matrix}} The first one can be roughly described as following: given the particular values of dimension $p$, number of samples $n$ and shape matrix $B$, what precision we can obtain when approximating the expected value of the corresponding Wishart matrix.
\item \text{\bf{Sequence of Wishart matrices}} Another setting arise from a different approach to the problem: assume we are given a sequence of matrices $B_n$ satisfying some assumptions stated below and the dimension $p$ is fixed. This data provides us with a sequence of Wishart matrices $W_n = \frac{1}{n}XB_nX^T$, where the number of columns in $X$ changes appropriately (an exact definition provided below). Assume in addition that all these Wishart matrices have a common expectation $W^0$. The natural question is how many measurements $n$ does one need to collect in order to estimate the mean value $W^0$ accurately.
\end{itemize}
Both of these problems arise in different areas of research and non-asymptotic analysis is often required. The problem becomes especially critical when the values $p$ and $n$ are of the same order of magnitude. Below we provide a theorem answering the two posed questions. Although the result obtained in Corollary \ref{main_th_2} is related to the case of fixed dimension $p$, it can be extended to a sequence $W_n^p$, where the dimension $p=p(n)$ varies with $n$, while keeping the spectral properties of the sequence of corresponding covariance matrices $\Theta_n$ controlled.

In particular, a partial answer to the second problem can be formulated as following: the number of samples proportional to $\sqrt{p} \ln^2{p}$ is needed to accurately estimate the expectation of the compound Wishart matrix.

The rest of the text is organized as following. After the notations section we provide additional definitions and the statements of the results. Then a few examples demonstrating the applications are given. The proof of the theorem concludes the paper.

\subsection*{Notations}
Capital last letters of the English alphabet ($W,X,Y,Z$) denote random entities, all the other letters stand for deterministic entities. 
For an arbitrary rectangular matrix $A,\: \norm{A}$ denotes its spectral norm and $\norm{A}_{\text{Frob}} = \sqrt{\Tr{AA^T}}$ stands for its Frobenius (trace) norm. For two vectors $v, u$ laying in a Euclidean space, $(u,v)$ denotes their scalar product and $\norm{v}_2$ the corresponding length. $I_m$ stands for the $m \times m$ identity matrix, when the dimension is obvious from the context the subscript is omitted.

\section{Problem Formulation and the Main Results}
In addition to the definitions given above we define a notion of sequence of Wishart matrices, corresponding to the second question stated in the Introduction.

\begin{definition}(Sequence of Wishart Matrices)
Consider a sequence $\{B_n\}_{n\in\mathcal{S}}$ of real $n \times n$ deterministic matrices, where $\mathcal{S} \subset \mathbb{N}$ is ordered, for every $n \in \mathcal{S}$ let $X_i \sim \mathcal{N}(0,\Theta), i=1,\dots,n$, where $\Theta$ is $p \times p$ real positive definite matrix, then define the sequence of Wishart matrices as
\begin{equation}
W_n = \frac{1}{n}X B_n X^T,
\label{wish_def}
\end{equation}
where $X = [X_1,\dots,X_n]$.
\end{definition}
The same Remark \ref{rem} as above applies here, as well. Note also that the dimension of $X$ depends on $n$, but this is not reflected by an additional subscript.

To make this definition useful and meaningful we will have to make some assumptions on the sequence $\{W_n\}_{n\in\mathcal{S}}$. In particular, we first want to answer the following question: what properties should we require from the sequence $\{B_n\}_{n\in\mathcal{S}}$ to make the sequence $\{W_n\}_{n\in\mathcal{S}}$ interesting to investigate. Below we fix the dimension $p$ and refer to vectors $X_i, i=1,\dots,n$ as measurements. So what actually changes from matrix to another in the sequence $W_n$ is the underlying matrix $B_n$ and the respective number of measurements.

The examples of sequences $\{B_n\}_{n\in\mathcal{S}}$ are the following:
\begin{itemize}
\item The most widely used is the sequence of diagonal matrices $B_n = \diag{b_1,\dots,b_n}, n \in \mathcal{S}$. When $\Tr{B}=n$ the expectation of the Wishart sequence coincides with the covariance matrix $\Theta$ as shown below.
\item Another common example is a sequence of skew-symmetric matrices of the form
\begin{equation}
B_n =
 \begin{pmatrix}
  0 & I_{n/2}  \\
  -I_{n/2} & 0
  \end{pmatrix}, 
\label{sk_sym}
\end{equation}
where $n$ is assumed even. We encountered this case when investigating the group symmetry properties of sample covariance matrices.  
\end{itemize}

In order to generalize the properties of $\{B_n\}_{n\in\mathcal{S}}$ we encountered in the application we state an additional auxiliary result that we did not found in the literature. 
\begin{lemma}
\label{consist_l}
Let $B$ be a real $n \times n$ matrix and $X$ real $p \times n$ with independent standard normally distributed elements, then for $W = \frac{1}{n}XBX^T$
\begin{equation}
\mathbb{E}(W) = \frac{\Tr{B}}{n}I.
\end{equation}
\begin{proof}
Denote the expectation $\mathbb{E}(W)$ by $W^0$ and consider the elements of $W^0$:
\begin{equation}
W^0_{ij} =  \frac{1}{n}\mathbb{E}\(\sum_{k,l=1}^n X_{ik} B_{kl} X_{jl}\).
\end{equation}
As all $X_{ik},i=1,\dots,p,\; k=1,\dots,n$ are independent we get immediately that
\begin{equation}
W^0_{ij} = \frac{1}{n}\sum_{k,l=1}^n B_{kl} \mathbb{E} \(X_{ik} X_{jl}\) = 0,\quad i \neq j,
\end{equation}
\begin{equation}
W^0_{ii} = \frac{1}{n}\sum_{k,l=1}^n B_{kl} \mathbb{E} \(X_{ik}X_{il}\) = \frac{1}{n}\sum_{k}^n B_{kk} \mathbb{E} \(X_{ik}^2\) =\frac{1}{n}\Tr{B}.
\end{equation}
And the statement follows.
\end{proof}
\end{lemma}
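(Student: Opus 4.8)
The plan is to reduce the matrix-valued expectation to elementary second-moment computations of the Gaussian entries. First I would write the $(i,j)$ entry of $W$ explicitly as $W_{ij} = \frac{1}{n}\sum_{k,l=1}^n X_{ik} B_{kl} X_{jl}$ and push the expectation inside the finite sum by linearity, so that everything hinges on the scalar quantities $\mathbb{E}(X_{ik}X_{jl})$. No assumption on $B$ beyond its being a fixed real matrix enters at this stage; in particular $B$ need not be symmetric.

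Since the entries $X_{ik}$ are independent standard normals, these second moments are completely determined: $\mathbb{E}(X_{ik}X_{jl}) = \delta_{ij}\delta_{kl}$, which vanishes unless both the row indices and the column indices coincide. Substituting this into the double sum annihilates every off-diagonal entry ($i \neq j$) and collapses the diagonal entries to $\frac{1}{n}\sum_k B_{kk} = \frac{1}{n}\Tr{B}$, which is exactly the asserted identity.

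A slightly slicker route I might prefer is coordinate-free. Writing $X = [x_1,\dots,x_n]$ in terms of its columns gives $XBX^T = \sum_{k,l=1}^n B_{kl}\, x_k x_l^T$, where the $x_k \in \mathbb{R}^p$ are i.i.d. standard Gaussian vectors. Taking the expectation, the independence and zero mean of distinct columns force $\mathbb{E}(x_k x_l^T) = \0$ for $k \neq l$, while $\mathbb{E}(x_k x_k^T) = I$; thus only the diagonal terms survive and one obtains $\frac{1}{n}\sum_k B_{kk} I = \frac{\Tr{B}}{n} I$.

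I do not expect a serious obstacle here: the statement is fundamentally a moment identity, and the only point requiring care is the bookkeeping of the second-moment tensor $\mathbb{E}(X_{ik}X_{jl})$ together with the observation that interchanging expectation with the finite sum is justified by linearity. The main thing to emphasize in the write-up is that the off-diagonal cancellation is a consequence of the independence of distinct entries rather than of any structural property of $B$.
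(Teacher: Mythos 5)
Your proposal is correct and follows essentially the same route as the paper: expand the $(i,j)$ entry of $W$, use linearity and the moment identity $\mathbb{E}(X_{ik}X_{jl}) = \delta_{ij}\delta_{kl}$ to kill the off-diagonal terms and collapse the diagonal to $\frac{1}{n}\Tr{B}$. The coordinate-free variant you sketch is just a repackaging of the same computation, so there is nothing further to reconcile.
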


\begin{corollary}
Let $B$ be a real $n \times n$ matrix and $X_i \sim \mathcal{N}(0,\Theta), i=1,\dots,n$, where $\Theta$ is $p \times p$ real positive definite matrix, be independent, then for $X = [X_1,\dots,X_n],\; W = \frac{1}{n}XBX^T$
\begin{equation}
\mathbb{E}(W) = \frac{\Tr{B}}{n}\Theta.
\end{equation}
\end{corollary}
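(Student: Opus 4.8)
The plan is to reduce the correlated case to the already-established white case of Lemma \ref{consist_l} by means of a whitening transformation. First I would invoke the representation noted in Remark \ref{rem}: since $\Theta$ is real positive definite it admits a symmetric positive definite square root $\Theta^{1/2}$, and writing $X_i = \Theta^{1/2} Y_i$ with $Y_i \sim \mathcal{N}(0, I)$ independent reproduces exactly the prescribed law, because $\Theta^{1/2} Y_i \sim \mathcal{N}(0, \Theta^{1/2} I (\Theta^{1/2})^T) = \mathcal{N}(0,\Theta)$ and independence of the columns is preserved under the common deterministic map. Collecting columns this gives $X = \Theta^{1/2} Y$ with $Y = [Y_1,\dots,Y_n]$ having independent standard normal entries, and hence $W = \frac{1}{n} \Theta^{1/2} Y B Y^T \Theta^{1/2}$.

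Next I would take expectations and use linearity together with the fact that $\Theta^{1/2}$ is deterministic to pull it outside the expectation: $\mathbb{E}(W) = \Theta^{1/2} \, \mathbb{E}\!\left(\frac{1}{n} Y B Y^T\right) \Theta^{1/2}$. The inner matrix $\frac{1}{n}YBY^T$ is precisely a white compound Wishart matrix in the sense of Lemma \ref{consist_l}, so that lemma applies verbatim and yields $\mathbb{E}\!\left(\frac{1}{n}YBY^T\right) = \frac{\Tr{B}}{n} I$.

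Finally I would substitute this back in and simplify, using the symmetry of the square root, to obtain $\mathbb{E}(W) = \frac{\Tr{B}}{n}\Theta^{1/2} I \Theta^{1/2} = \frac{\Tr{B}}{n}\Theta$, which is the claim.

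There is no genuine obstacle in this argument; the only point deserving a word of care is the legitimacy of the whitening step, namely that $X = \Theta^{1/2}Y$ simultaneously reproduces the correct marginal covariance $\Theta$ and the independence of the columns. Once this is granted, the statement follows immediately from the scalar-covariance identity of Lemma \ref{consist_l}, and the corollary is essentially a conjugation of that result by $\Theta^{1/2}$.
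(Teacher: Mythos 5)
Your proof is correct and is exactly the argument the paper intends: the corollary is stated without proof, but Remark \ref{rem} supplies precisely the whitening representation $W = \frac{1}{n}\Theta^{1/2} Y B Y^T \Theta^{1/2}$ that you use to reduce to Lemma \ref{consist_l}. Nothing further is needed.
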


In particular, Lemma \ref{consist_l} implies that if $\mathcal{S}$ is unbounded, then to ensure the sequence $\{W_n\}_{n \in \mathcal{S}}$ is consistent we should at least demand $\frac{1}{n}\Tr{B_n} \rightarrow \beta \in \mathbb{R}$. We actually make a stronger
\begin{assm}
\label{assm_1}
The scaled traces $\frac{1}{n}\Tr{B_n}$ are all equal: $\frac{1}{n}\Tr{B_n} = \beta$ for all $n \in \mathcal{S}$. Since we can scale the sequence $\{W_n\}_{n \in \mathcal{S}}$, without loss of generality assume $\frac{1}{n}\Tr{B_n} = 1, \forall n \in \mathcal{S}$.
\end{assm}

\subsection*{Main results}
\begin{theorem} 
\label{main_th_1}
Let $\Theta$ is $p \times p$ real positive definite matrix and $X_i \sim \mathcal{N}(0,\Theta), i=1,\dots,n$, be independent. Let $B$ be an arbitrary real $n \times n$ matrix and denote $\kappa = \frac{\norm{B}_{\text{Frob}}}{\norm{B}},\; \sigma = \norm{B}$, then
\begin{equation*}
\mathbb{E}\norm{W - W^0} \leq \frac{24 \ceil{\ln{2p}}^2 \sqrt{p}(4\sigma+\kappa\sqrt{\pi})}{n}\norm{\Theta}.
\end{equation*}
\end{theorem}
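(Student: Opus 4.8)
The plan is to strip off the covariance matrix, reduce the claim to a centered quadratic form in a white Gaussian matrix, and then control its spectral norm through a single high trace moment whose order is tuned to the dimension $p$.

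First I would invoke Remark~\ref{rem} to write $W-W^0=\frac1n\Theta^{1/2}\bigl(YBY^T-\Tr{B}\,I\bigr)\Theta^{1/2}$, where $Y=[y_1,\dots,y_n]$ has independent $\mathcal N(0,I)$ columns and, by Lemma~\ref{consist_l}, $\mathbb{E}(YBY^T)=\Tr{B}\,I$. Submultiplicativity of the spectral norm gives $\norm{W-W^0}\le\frac{\norm{\Theta}}{n}\norm{Z}$ with $Z:=YBY^T-\Tr{B}\,I$, so the whole problem collapses to proving $\mathbb{E}\norm{Z}\le 24\ceil{\ln 2p}^2\sqrt p\,(4\sigma+\kappa\sqrt\pi)$. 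Since $Z$ need not be symmetric when $B$ is not, I would work throughout with the positive semidefinite matrix $ZZ^T$.

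The core of the argument is the moment (trace) method. By Jensen's inequality and positive semidefiniteness, for any integer $q\ge1$ one has $\mathbb{E}\norm{Z}\le\bigl(\mathbb{E}\norm{Z}^{2q}\bigr)^{1/2q}\le\bigl(\mathbb{E}\Tr{(ZZ^T)^q}\bigr)^{1/2q}$, using $\norm{Z}^{2q}=\norm{(ZZ^T)^q}\le\Tr{(ZZ^T)^q}$. I would then choose $q=\ceil{\ln 2p}$, the decisive point being that this choice neutralizes the ambient dimension: $p^{1/2q}\le(2p)^{1/2q}=e^{\ln(2p)/(2q)}\le\sqrt e$. Up to the harmless factor $\sqrt e$ it then remains to show that $\mathbb{E}\Tr{(ZZ^T)^q}$ is of order $\bigl(q^2\sqrt p\,(4\sigma+\kappa\sqrt\pi)\bigr)^{2q}$, after which taking the $2q$-th root returns $q^2=\ceil{\ln 2p}^2$, the factor $\sqrt p$, and the envelope $4\sigma+\kappa\sqrt\pi$. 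To evaluate $\mathbb{E}\Tr{(ZZ^T)^q}$ I would expand the $2q$-fold product of entries of $Y$ and apply Wick's theorem (Isserlis), rewriting the expectation as a sum over pairings of the Gaussian factors; the subtracted term $\Tr{B}\,I$ is precisely what cancels the pairings reproducing the mean, so only genuinely fluctuating pairings survive. Each class of pairings is bounded by a product of traces of powers of $B$ and $B^T$ times a power of $p$ counting the closed index loops, with the two extreme classes producing the $\sigma=\norm{B}$ and the $\kappa=\norm{B}_{\text{Frob}}/\norm{B}$ contributions and the number of admissible pairings contributing the combinatorial growth $q^{2q}$.

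The main obstacle is exactly this combinatorial estimate. The difficulty is to organize the pairings so that every surviving term is dominated by $\sqrt p\,\sigma$ or $\sqrt p\,\kappa$ up to the pairing count, to control the number of index loops uniformly over all pairings (this is where the spectral norm $\norm{B}$ must be used to collapse long $B$-chains while the Frobenius norm closes the short ones, which is what separates the $\sigma$ and $\kappa$ terms and is essentially the order-two Gaussian chaos hypercontractivity that costs one factor of $q$ per side), and to track the absolute constants so that the resulting envelope is exactly $4\sigma+\kappa\sqrt\pi$ with the stated prefactor; the $\sqrt\pi$ should emerge from the Gaussian absolute moments generated by the diagonal contractions. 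Once the trace moment is in hand, I would substitute $q=\ceil{\ln 2p}$, absorb $p^{1/2q}\le\sqrt e$ and the numerical constants into the factor $24$, and finally restore the prefactor $\norm{\Theta}/n$ from the first step to obtain the asserted bound.
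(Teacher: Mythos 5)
You take a genuinely different route from the paper: the paper never computes trace moments, but instead decouples the chaos $(Wx,y)$ (Lemma \ref{gaus_chaos} and Lemma \ref{decoup_bound}), reduces the spectral norm to a maximum over the finite set of regular vectors (Lemma \ref{norm_bound}), conditions on $X$ so that $(W'x,y)$ becomes a one-dimensional Gaussian whose standard deviation $\sigma_x(X)$ is controlled by Gaussian concentration of Lipschitz functions (Lemma \ref{conc_gauss_space}), and then finishes with union bounds and integration of the tail. Your opening reduction to $\Theta=I$ via Remark \ref{rem} and submultiplicativity is fine, as is the chain $\mathbb{E}\norm{Z}\le(\mathbb{E}\Tr{(ZZ^T)^q})^{1/2q}$ with $q=\ceil{\ln 2p}$ to make $p^{1/2q}\le\sqrt e$ harmless. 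However, there is a genuine gap at precisely the point you yourself label ``the main obstacle'': the estimate $\mathbb{E}\Tr{(ZZ^T)^q}\le\bigl(Cq^2\sqrt p\,(\sigma+\kappa)\bigr)^{2q}$ is stated as a goal and never established. For a general non-symmetric $B$ the Wick expansion of $\Tr{(ZZ^T)^q}$ is a sum over pairings of $4q$ Gaussian entries, and one must (i) bound the power of $p$ contributed by each pairing via its closed row-index loops, (ii) bound the $B$-dependent weight of each pairing by the correct mixture of $\norm{B}$ and $\norm{B}_{\text{Frob}}$ according to the cycle structure in the column indices, and (iii) show that the weighted pairing count is at most $(Cq)^{2q}$ rather than the raw $(4q-1)!!$. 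None of these steps is carried out, each is where such arguments typically lose extra powers of $p$ or $n$, and together they constitute essentially the entire proof. As written, the proposal is a plan rather than a proof.

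A secondary point: the explicit constants $4$ and $\sqrt\pi$ in the envelope $4\sigma+\kappa\sqrt\pi$ are artifacts of the paper's route (the $\sqrt\pi$ comes from integrating the tail $\exp(-u^2/4)$ in (\ref{pro_b}), not from Gaussian absolute moments of diagonal contractions), so a trace-moment computation would not reproduce them; your argument, if completed, would yield the bound only up to an unspecified absolute constant, which would still require a remark reconciling it with the statement as written.
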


\begin{corollary} 
\label{main_th_2}
Let $\Theta$ is $p \times p$ real positive definite matrix and $\{B_n\}_{n\in\mathcal{S}} \in \mathbb{R}^{n \times n}$, where $\mathcal{S} \subset \mathbb{N}$ is ordered. For every $n \in \mathcal{S}$ let $X_i \sim \mathcal{N}(0,\Theta), i=1,\dots,n$, be independent. Assume that $\Tr{B_n} = n,\; \forall n \in \mathcal{S}$, and denote $\kappa = \max_{n \in \mathcal{S}}\norm{B}_{\text{Frob}}$ and $\sigma = \max_{n \in \mathcal{S}}\norm{B_n}$, then
\begin{equation*}
\mathbb{E}\norm{W_n - W^0} \leq \frac{24 \ceil{\ln{2p}}^2 \sqrt{p}(4\sigma+\kappa\sqrt{\pi})}{n}\norm{\Theta}.
\end{equation*}
\end{corollary}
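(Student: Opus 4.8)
The statement to prove is the sequence version of Theorem \ref{main_th_1}: under the normalization $\Tr{B_n}=n$ (Assumption \ref{assm_1}) the Corollary to Lemma \ref{consist_l} shows that all the matrices $W_n$ share the common mean $W^0=\Theta$, and since $\norm{B_n}\le\sigma$ and $\norm{B_n}_{\text{Frob}}\le\kappa$ hold uniformly in $n$ while the right-hand side is monotone in both quantities, the desired inequality for each fixed $n$ is exactly Theorem \ref{main_th_1} applied with $B=B_n$. So the plan is to establish the single-matrix estimate and then read off the Corollary term by term. First I would use Remark \ref{rem} to write $W-W^0=\frac1n\Theta^{1/2}\(YBY^T-\Tr{B}I\)\Theta^{1/2}$ with $Y$ having i.i.d.\ $\mathcal N(0,1)$ entries, and reduce by submultiplicativity of the spectral norm to
\begin{equation*}
\mathbb E\norm{W-W^0}\le\frac{\norm{\Theta}}{n}\,\mathbb E\norm{YBY^T-\Tr{B}I},
\end{equation*}
so that the whole problem collapses to the scale-free quantity $\mathbb E\norm{Z}$ with $Z=YBY^T-\Tr{B}I$.

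Next I would discretize. Writing $\norm{Z}=\sup_{u,v\in S^{p-1}}\langle u,Zv\rangle$ and replacing the two spheres by a $\tfrac14$-net of cardinality at most $9^{p}$, the norm is controlled up to a factor $2$ by the maximum of $\langle u,Zv\rangle$ over at most $9^{2p}$ pairs. For fixed $u,v$ the scalar $\langle u,Zv\rangle=(Y^Tu)^TB(Y^Tv)-\Tr{B}\langle u,v\rangle$ is a centered second-order Gaussian chaos; vectorizing $Y$, it is the quadratic form whose kernel is $uv^T\otimes B$, with operator norm $\norm{B}=\sigma$ and Frobenius norm $\norm{B}_{\text{Frob}}$. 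Thus each coordinate obeys a Hanson--Wright-type tail with sub-Gaussian scale $\norm{B}_{\text{Frob}}$ and sub-exponential scale $\sigma$, which is precisely the source of the two terms $\kappa$ and $\sigma$ in the bound.

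To pass from one coordinate to the maximum I would invoke concentration of Gaussian measure rather than a blunt union bound. Conditioning on one of the two Gaussian factors linearizes the chaos: for fixed $b=Y^Tv$ the map $a\mapsto a^TBb$ is Lipschitz with constant $\norm{Bb}\le\sigma\norm{b}$, so the Gaussian concentration inequality yields a conditional sub-Gaussian tail, and the remaining randomness enters only through $\norm{Y^Tu}$, a $\chi$-type quantity concentrated near $\sqrt n$. Integrating these tails against the net entropy $\ln 9^{2p}=O(p)$ produces the dominant contribution $\sqrt p\,\norm{B}_{\text{Frob}}$, the factor $\sqrt\pi$ arising from $\int_0^\infty e^{-t^2/2}\,dt$; this is the $\kappa\sqrt\pi$ term. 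The sub-exponential part is handled by truncating $\norm{Y^Tu}^2$ at a level proportional to $\ceil{\ln 2p}$ so that the conditional Lipschitz constant is valid simultaneously over the entire net; the truncation radius combined with the net entropy is what manufactures the squared logarithm $\ceil{\ln 2p}^2$ and the $4\sigma\sqrt p$ term, and collecting the universal constants from the net, the two tail integrals, and the truncation yields the prefactor $24$.

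The step I expect to be the main obstacle is exactly this last one: the map $Y\mapsto YBY^T$ is quadratic, hence not globally Lipschitz, so a single application of Gaussian concentration is unavailable, and a naive union bound over the exponentially large net would charge the sub-exponential scale at level $\ln 9^{2p}=O(p)$, producing a spurious $p\,\sigma$ in place of the claimed $\sqrt p\,\sigma$. The truncation at radius $\sim\ceil{\ln 2p}$ is what trades that extra factor of $\sqrt p$ for the two logarithmic factors, and making this truncation hold uniformly over the net while keeping every constant explicit is the delicate part of the argument. Once the scale-free estimate $\mathbb E\norm{Z}\le 24\ceil{\ln 2p}^2\sqrt p\,(4\sigma+\kappa\sqrt\pi)$ is secured, multiplying by $\norm{\Theta}/n$ and specializing to $B=B_n$ completes both Theorem \ref{main_th_1} and the Corollary.
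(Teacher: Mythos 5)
Your top-level reduction is exactly the paper's proof of the Corollary: under $\Tr{B_n}=n$ all the $W_n$ share the mean $W^0=\Theta$, and the bound of Theorem \ref{main_th_1} applied with $B=B_n$ is monotone in $\norm{B_n}$ and $\norm{B_n}_{\text{Frob}}$, so replacing these by their suprema $\sigma$ and $\kappa$ gives the claim. That part matches the paper and is correct.

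However, most of your write-up is a sketch of Theorem \ref{main_th_1} itself, and there it diverges from the paper in ways that leave genuine gaps. First, you condition on $b=Y^Tv$ and treat $a=Y^Tu$ as an independent Gaussian, but $Y^Tu$ and $Y^Tv$ are jointly Gaussian with covariance $\langle u,v\rangle I_n$ and are not independent; the paper's first step (Lemmas \ref{gaus_chaos} and \ref{decoup_bound}) is precisely a decoupling that replaces $XBX^T$ by $X'BX^T$ with an independent copy $X'$, at the cost of the factor $2$ that turns $12$ into $24$. You omit decoupling entirely, so the conditioning step is not valid as stated. Second, your discretization uses a standard $\tfrac14$-net of size $9^{2p}$, and you correctly diagnose that a plain union bound then charges the sub-exponential scale at entropy level $O(p)$, producing $p\,\sigma$ instead of $\sqrt{p}\,\sigma$; but the fix you propose (truncating $\norm{Y^Tu}^2$ at level $\sim\ceil{\ln{2p}}$) is only asserted, not carried out, and it is not the mechanism behind the stated bound. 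In the paper the $\ceil{\ln{2p}}^2$ factor comes from Lemma \ref{norm_bound}, which bounds the spectral norm by a maximum over the set $\text{Reg}_p$ of regular (coordinate-sparse) vectors, and the $\sqrt{p}$ in front of $\sigma$ comes from the balance between the conditional standard deviation, which scales like $\norm{y}_\infty=1/\sqrt{s}$ for $y\in\text{Reg}_p(s)$, and the cardinality $\binom{p}{s}2^s$ of that set, leading to the choice $u\ge 3\sqrt{p}$; the explicit constants $24$, $4$ and $\sqrt{\pi}$ then fall out of integrating the tail bound (\ref{pro_b}). Your alternative net cannot reproduce those constants, and without a worked-out truncation argument the $\sqrt{p}\,\sigma$ term is not established.
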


\section{Preliminaries}
\subsection{Proof outline} In the rest of this paper we prove Theorems \ref{main_th_1} and \ref{main_th_2}. We shall observe that for a Wishart matrix $W$ the quadratic form $(Wx,y)$ is a Gaussian chaos (defined below) for fixed unit vectors $x$ and $y$ on the sphere $S^{p-1}$. We control the chaos uniformly for all $x, y$ by establishing concentration inequalities depending on the "sparsity" of $x, y$. We do so using the techniques of decoupling, conditioning, and applying concentration bounds for Gaussian measure. After this we make use of covering arguments to measure the number of sparse vectors $x, y$ on the sphere. The general layout of the proof goes parallel to the proof given by \cite{levina2012partial}, we modify and generalize a few of their intermediate results to the case of non-symmetric matrices.

\subsection{Decoupling} We start by considering bilinear forms in normally distributed vectors. The following definition will be useful is the sequel.
\begin{definition}
Let $Z \in \mathbb{R}^p$ be a centered Gaussian random vector and $B$ a square $p \times p$ matrix, then the bilinear form $(BZ,Z)$ is called a quadratic Gaussian chaos.
\end{definition}
We generalize here Lemma 3.2 from \cite{levina2012partial} to the case of non-symmetric matrices.

\begin{lemma}(Decoupling of Gaussian Chaos)
\label{gaus_chaos}
Let $Z \in \mathbb{R}^p$ be centered normal random vector and $Z'$ its independent copy. Let also $\mathcal{B}$ be a subset of $p \times p$ square matrices. Then
\begin{equation*}
\mathbb{E}\sup_{B \in \mathcal{B}}|(B Z,Z) - \mathbb{E}(B Z,Z)| \leq \mathbb{E}\sup_{B \in \mathcal{B}}|(B Z,Z').
\end{equation*}
\end{lemma}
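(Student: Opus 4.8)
The plan is to prove the bound by a Gaussian rotation (symmetrization) followed by a conditioning step and Jensen's inequality. The single structural fact I will lean on is the rotational invariance of the centered Gaussian law: if $Z,Z'$ are independent copies, then the pair
\begin{equation*}
Z_1 = \frac{Z+Z'}{\sqrt{2}}, \qquad Z_2 = \frac{Z-Z'}{\sqrt{2}}
\end{equation*}
again consists of two independent copies of $Z$, and moreover $(Z_1,Z_2)$ has the same joint law as $(Z,Z')$. This is immediate since $Z_1,Z_2$ are jointly Gaussian, centered, each with covariance equal to that of $Z$, and uncorrelated. In particular $\mathbb{E}\sup_{B\in\mathcal{B}}|(BZ,Z')| = \mathbb{E}\sup_{B\in\mathcal{B}}|(BZ_1,Z_2)|$, so it suffices to bound the left-hand side by the decoupled quantity written in $Z_1,Z_2$.

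First I would expand the decoupled form in terms of $Z,Z'$,
\begin{equation*}
(BZ_1,Z_2) = \tfrac{1}{2}\left[(BZ,Z) - (BZ,Z') + (BZ',Z) - (BZ',Z')\right],
\end{equation*}
and then take the conditional expectation given $Z$, i.e. integrate out $Z'$. This is the key step and precisely where the passage to non-symmetric $B$ must be handled. The two ``diagonal'' contributions yield $\tfrac{1}{2}[(BZ,Z)-\mathbb{E}(BZ,Z)]$, because $\mathbb{E}_{Z'}(BZ',Z') = \mathbb{E}(BZ,Z)$. The two ``cross'' contributions each vanish after integrating out $Z'$: since $Z'$ is centered and $b\mapsto Bb$ is linear, $\mathbb{E}_{Z'}(BZ,Z') = (BZ,\mathbb{E} Z') = 0$ and likewise $\mathbb{E}_{Z'}(BZ',Z) = (B\,\mathbb{E} Z',Z) = 0$. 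Crucially this uses only the centering of $Z'$ and no relation between $(BZ',Z)$ and $(BZ,Z')$; this is exactly what removes the symmetry hypothesis $B=B^{T}$ present in the original symmetric statement. Hence
\begin{equation*}
\mathbb{E}\left[(BZ_1,Z_2)\mid Z\right] = \tfrac{1}{2}\left[(BZ,Z) - \mathbb{E}(BZ,Z)\right].
\end{equation*}

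Finally I would remove the conditional expectation by Jensen's inequality. Writing the deviation as $2\,\mathbb{E}[(BZ_1,Z_2)\mid Z]$, taking the supremum over $B\in\mathcal{B}$ and absolute values, and using that the supremum of conditional expectations is dominated by the conditional expectation of the supremum together with $|\mathbb{E}[\cdot\mid Z]|\le\mathbb{E}[|\cdot|\mid Z]$, I take the outer expectation over $Z$ and invoke $(Z_1,Z_2)\overset{d}{=}(Z,Z')$ to collapse the right-hand side, obtaining
\begin{equation*}
\mathbb{E}\sup_{B\in\mathcal{B}}\left|(BZ,Z)-\mathbb{E}(BZ,Z)\right| \;\le\; 2\,\mathbb{E}\sup_{B\in\mathcal{B}}|(BZ_1,Z_2)| \;=\; 2\,\mathbb{E}\sup_{B\in\mathcal{B}}|(BZ,Z')|.
\end{equation*}
I should flag that, carried out faithfully, this symmetrization produces the absolute constant $2$ rather than $1$: a one-dimensional check with $Z\sim\mathcal{N}(0,1)$ and a single scalar $B=b$ gives $\mathbb{E}|bZ^{2}-b| = |b|\sqrt{8/(\pi e)}\approx 0.97\,|b|$ versus $\mathbb{E}|bZZ'| = (2/\pi)|b|\approx 0.64\,|b|$, so no constant below roughly $3/2$ can hold. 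The extra factor is harmless downstream, where it is absorbed into the numerical constant of Theorem~\ref{main_th_1}.

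The step I expect to demand the most care is the conditioning identity: one must confirm that the cross terms genuinely integrate to zero for an arbitrary, possibly non-symmetric $B$, and that the rotation $(Z,Z')\mapsto(Z_1,Z_2)$ preserves the joint distribution so that the final quantity may be rewritten back in $(Z,Z')$. Beyond that, measurability and integrability of $\sup_{B\in\mathcal{B}}$ over a general index set should be assumed (e.g.\ via separability of $\mathcal{B}$), exactly as in the symmetric case.
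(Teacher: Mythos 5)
Your argument is correct, and it arrives at the same conclusion as the paper's own proof --- the decoupling bound with absolute constant $2$ --- using the same key fact (rotation invariance of the pair $\bigl((Z+Z')/\sqrt{2},\,(Z-Z')/\sqrt{2}\bigr)$) but organized differently. The paper first applies Jensen to replace $\mathbb{E}(BZ,Z)$ by $(BZ',Z')$ under the supremum, then splits the symmetrized difference via the identity $(BZ,Z)-(BZ',Z')=(BZ_1,Z_2)+(BZ_2,Z_1)$ and applies the triangle inequality to the two cross terms; you instead keep the single term $(BZ_1,Z_2)$, integrate out $Z'$ so that the off-diagonal contributions vanish and the diagonal ones reproduce the centered chaos, and then apply Jensen to the conditional expectation. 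This is the standard conditional-expectation form of the decoupling argument; it needs only that $Z'$ is centered, and as you note it handles non-symmetric $B$ with no extra effort --- though the paper's algebraic identity is equally valid for arbitrary $B$, so neither route gains anything quantitative over the other (your version does avoid the paper's preliminary reduction to standard $Z$, since the rotation argument works for any covariance). The point you flag about the constant is well taken and worth emphasizing: your one-dimensional computation showing that no constant below roughly $3/2$ can hold is correct, the paper's own proof in fact terminates at $2\,\mathbb{E}\sup_{B\in\mathcal{B}}|(BZ',Z)|$, and the subsequent Lemma~\ref{decoup_bound} is stated and used with the factor $2$; the displayed statement of the lemma is simply missing that factor (together with a closing absolute-value bar), so your version is the one that should be read into the text.
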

\begin{proof}
Without loss of generality assume that $Z$ is standard, otherwise plug $\Theta^{-1/2}B\Theta^{-1/2}$ instead of $B$ and follow the same reasoning (here $\Theta$ is the covariance matrix of $Z$).
\begin{align*}
&E:=\mathbb{E}_{Z}\sup_{B \in \mathcal{B}}|(B Z,Z) - \mathbb{E}(B Z,Z)| = \mathbb{E}_{Z}\sup_{B \in \mathcal{B}}|(B Z,Z) - \mathbb{E}_{Z'}(B Z',Z')| \\
&\leq \mathbb{E}_{Z,Z'}\sup_{B \in \mathcal{B}}|(B Z,Z) - (B Z',Z')|,
\end{align*}
where the equality is due to the fact that the distributions of $Z$ and $Z'$ are identical and the inequality is due to Jensen. In the calculation above we emphasized explicitly the variables of integration in the  expectations to make the transitions clear. For an arbitrary $B$ note the identity
\begin{equation*}
(BZ,Z) - (BZ',Z') = \(B\frac{Z+Z'}{\sqrt{2}},\frac{Z-Z'}{\sqrt{2}}\) + \(B\frac{Z-Z'}{\sqrt{2}},\frac{Z+Z'}{\sqrt{2}}\).
\end{equation*}
By rotation invariance of the standard Gaussian measure, the pair $\(\frac{Z+Z'}{\sqrt{2}},\frac{Z-Z'}{\sqrt{2}}\)$ is distributed identically with $(Z,Z')$, hence we conclude that
\begin{align*}
&E \leq \mathbb{E}_{Z,Z'}\sup_{B \in \mathcal{B}}|(B Z,Z') + (B Z',Z)| \leq \mathbb{E}_{Z,Z'}\sup_{B \in \mathcal{A}}|(B Z,Z')| + \mathbb{E}_{Z,Z'}\sup_{B \in \mathcal{B}}|(B Z',Z)| \\
&= 2\mathbb{E}_{Z,Z'}\sup_{B \in \mathcal{B}}|(B Z',Z)|,
\end{align*}
and the statement follows.
\end{proof}

\begin{lemma}
\label{decoup_bound}
Let $X_1,\dots,X_n,X'_1,\dots,X'_n \sim \mathcal{N}(0,\Theta)$, where $\Theta$ is a $p \times p$ real positive definite matrix, be all independent. Consider the compound Wishart matrix and its decoupled counterpart defined as
\begin{equation*}
W = \frac{1}{n} X BX^T, \quad W' = \frac{1}{n} X' BX^T.
\end{equation*}
Denote $W^0 = \mathbb{E}(W) = \frac{\Tr{B}}{n}\Theta$, then
\begin{equation*}
\mathbb{E}\norm{W - W^0} \leq 2\mathbb{E}\norm{W'}.
\end{equation*}
\end{lemma}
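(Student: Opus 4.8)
The plan is to express the spectral norm deviation as a supremum of quadratic Gaussian chaoses and then invoke the decoupling estimate of Lemma \ref{gaus_chaos} directly. First, I would use the variational description of the spectral norm of a (possibly non-symmetric) $p\times p$ matrix $A$, namely $\norm{A}=\sup_{x,y\in S^{p-1}}|(Ax,y)|$. Applied to $A=W-W^0$ and combined with the linearity of $(Wx,y)$ in $W$ together with the identity $W^0=\mathbb{E}(W)$, this gives
\begin{equation*}
\norm{W-W^0}=\sup_{x,y\in S^{p-1}}\big|(Wx,y)-\mathbb{E}(Wx,y)\big|.
\end{equation*}
The essential observation is that, for each fixed pair $(x,y)$, the scalar $(Wx,y)=\tfrac1n\,y^T X B X^T x$ is a quadratic form in the Gaussian entries of $X$.

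Second, I would vectorize $X$. Writing $Z=\vvec(X)\in\mathbb{R}^{pn}$ for the column stacking of $X$ --- a centered Gaussian vector with block covariance $I_n\otimes\Theta$ --- a direct index count gives
\begin{equation*}
(Wx,y)=\frac1n\sum_{i,j=1}^{p}\sum_{k,l=1}^{n}y_i x_j B_{kl}X_{ik}X_{jl}=(M_{x,y}Z,Z),
\end{equation*}
where $M_{x,y}$ is the deterministic matrix indexed by pairs $(i,k)$ with entries $(M_{x,y})_{(i,k),(j,l)}=\tfrac1n y_i x_j B_{kl}$. Hence, with the deterministic family $\mathcal{B}=\{M_{x,y}:x,y\in S^{p-1}\}$, the previous display is exactly $\sup_{M\in\mathcal{B}}|(MZ,Z)-\mathbb{E}(MZ,Z)|$, the left-hand side of Lemma \ref{gaus_chaos} (which holds verbatim with the ambient dimension $p$ replaced by $pn$, and for a general centered Gaussian $Z$, so the correlated covariance causes no difficulty).

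Third, I would apply Lemma \ref{gaus_chaos} with $Z'=\vvec(X')$ an independent copy, which supplies the factor $2$ appearing in its proof:
\begin{equation*}
\mathbb{E}\norm{W-W^0}\leq 2\,\mathbb{E}\sup_{x,y\in S^{p-1}}\big|(M_{x,y}Z,Z')\big|.
\end{equation*}
It then remains to identify the decoupled chaos. Substituting $Z=\vvec(X)$ and $Z'=\vvec(X')$ and unwinding the same index bookkeeping yields $(M_{x,y}Z,Z')=\tfrac1n y^T X' B X^T x=(W'x,y)$, so $\sup_{x,y}|(M_{x,y}Z,Z')|=\norm{W'}$ and the bound $\mathbb{E}\norm{W-W^0}\leq 2\,\mathbb{E}\norm{W'}$ follows.

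The step I expect to require the most care is purely combinatorial: tracking the four index sums so that the decoupled form lands on $W'=\tfrac1n X'BX^T$ and not on the alternative decoupling $\tfrac1n X B (X')^{T}$. Since these two matrices are equal in distribution (as $X$ and $X'$ are i.i.d.), they share the same expected spectral norm, so whichever of the two cross terms $(M_{x,y}Z,Z')$ or $(M_{x,y}Z',Z)$ the lemma is applied with, the final constant $2$ and the right-hand side $\mathbb{E}\norm{W'}$ are unaffected. I would also note that no symmetrization of $M_{x,y}$ is needed, since the decoupling identity used to prove Lemma \ref{gaus_chaos} is valid for arbitrary (non-symmetric) matrices.
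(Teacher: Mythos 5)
Your proposal is correct and follows essentially the same route as the paper: express $\norm{W-W^0}$ as $\sup_{x,y\in S^{p-1}}|(Wx,y)-\mathbb{E}(Wx,y)|$, vectorize $X$ so that each $(Wx,y)$ becomes a quadratic Gaussian chaos $(M_{x,y}Z,Z)$ over the family $\mathcal{B}=\{M_{x,y}\}$, and apply the decoupling Lemma \ref{gaus_chaos}. You simply make explicit the bookkeeping (the form of $M_{x,y}$, the covariance $I_n\otimes\Theta$ of $\vvec(X)$, and the identification $(M_{x,y}Z,Z')=(W'x,y)$) that the paper leaves as ``it is easy to see.''
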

\begin{proof}
Using the definition of spectral norm we obtain
\begin{align*}
\mathbb{E}\norm{W - W^0} =\mathbb{E}\sup_{x,y \in S^{p-1}}|(W x,y) - \mathbb{E}(W x,y)|.
\end{align*}
We rewrite the inner product as
\begin{align*}
(W x,y) = \frac{1}{n}\sum_{i,j=1}^p \(\sum_{l,m=1}^n b_{lm}X_{il}X_{jm}\) x_i y_j = 
\frac{1}{n}\sum_{i,j=1}^p \sum_{l,m=1}^n b_{lm}X_{il}X_{jm} x_i y_j.
\end{align*}
Let us now stack the vectors $(X_1,\dots,X_n)$ into one long vector $\vec{X} := \vec{\{X_{il}\}_{i,l=1}^{p,n}}$ which is a normal vector of dimension $n \times p$, then it is easy to see that the right-hand side of the last equality is a quadratic Gaussian chaos in $\vec{X}$ and the previous lemma applies with the appropriate choice of $\mathcal{B}$.
\end{proof}

\subsection{Concentration}

\begin{lemma}
\label{st_dev}
Let $Z = (Z_1,\dots,Z_p) \sim \mathcal{N}(0,\Theta)$, $\Theta$ is a $p \times p$ positive definite matrix and let $a=(a_1,\dots,a_p) \in \mathbb{R}^p$. Then $\sum_{i=1}^pa_i Z_i$ is a centered normal variable with standard deviation $\norm{\Theta^{1/2}a}_2 \leq \norm{\Theta^{1/2}}\norm{a}_2$.
\end{lemma}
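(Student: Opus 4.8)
The plan is to recognize the scalar $S := \sum_{i=1}^p a_i Z_i = (a,Z)$ as a linear functional of the jointly Gaussian vector $Z$, and then to exploit the fact that any fixed linear image of a Gaussian vector is itself Gaussian. This is the only place where the joint (as opposed to merely marginal) Gaussianity of $Z$ is used, and it is what guarantees that $S$ is genuinely normal rather than just a variable with prescribed first two moments. Granting this, $S$ is determined entirely by its mean and variance, so the remainder of the argument is a pair of moment computations.

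First I would compute the mean. Since $\mathbb{E}(Z) = \0$, linearity of expectation gives $\mathbb{E}(S) = (a,\mathbb{E}(Z)) = 0$, so $S$ is centered. Next I would compute the variance by writing $S^2 = (a^T Z)(Z^T a)$ and pulling the expectation inside: $\mathbb{E}(S^2) = a^T \mathbb{E}(Z Z^T) a = a^T \Theta a$, using that $\Theta$ is by definition the covariance matrix $\mathbb{E}(Z Z^T)$ of the centered vector $Z$. Thus $\mathrm{Var}(S) = a^T \Theta a$.

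To identify the claimed form of the standard deviation, I would factor $\Theta = \Theta^{1/2}\Theta^{1/2}$ (the symmetric positive definite square root, which exists since $\Theta$ is positive definite) and rewrite $a^T \Theta a = (\Theta^{1/2} a)^T (\Theta^{1/2} a) = \norm{\Theta^{1/2} a}_2^2$. Taking square roots gives the standard deviation $\norm{\Theta^{1/2} a}_2$. Finally, the inequality $\norm{\Theta^{1/2} a}_2 \leq \norm{\Theta^{1/2}}\,\norm{a}_2$ is immediate from the definition of the spectral norm as the operator norm, namely $\norm{\Theta^{1/2} v}_2 \leq \norm{\Theta^{1/2}}\,\norm{v}_2$ applied at $v = a$. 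There is no substantive obstacle in this lemma; the only point requiring care is invoking closure of the Gaussian family under linear maps to conclude that $S$ is normal, after which everything reduces to elementary covariance algebra.
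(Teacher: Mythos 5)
Your proof is correct and complete; the paper itself states Lemma~\ref{st_dev} without proof, treating it as a standard fact, and your argument (closure of Gaussians under linear maps, the variance computation $a^T\Theta a = \norm{\Theta^{1/2}a}_2^2$, and the operator-norm bound) is exactly the canonical justification one would supply.
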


We state here an auxiliary result from Concentration of the Gaussian Measure theory. Such concentration results are usually stated in terms of the standard normal distribution, but they can be easily generalized for an arbitrary normal distribution as following
\begin{lemma} \cite{ledoux1991probability}
\label{conc_gauss_space}
Let $f: \mathbb{R}^p \to \mathbb{R}$ be a Lipschitz function with respect to the Euclidean metric with constant $L = \norm{f}_{\text{Lip}}$. Let $Z \sim \mathcal{N}(0,\Theta)$, $\Theta$ is a $p \times p$ positive definite matrix then
\begin{equation*}
\mathbb{P}(f(Z) - \mathbb{E}f(Z))\leq \frac{1}{2}\exp\(-\frac{t^2}{2L^2\norm{\Theta}}\),\quad \forall t\geq 0.
\end{equation*}
\end{lemma}

\subsection{Discretization}
Recall that in the Euclidean $p$ dimensional space the spectral norm of a square $p \times p$ (not necessarily symmetric) matrix $A$ can be defined as
\begin{align*}
\mathbb{E}\norm{A} =\mathbb{E}\sup_{x,y \in S^{p-1}}|(A x,y)|.
\end{align*}
We approximate the spectral norm of matrices by using $\varepsilon$-nets in the following way:
\begin{lemma} \cite{eldar2012compressed}
Let $A$ be a $p \times p$ matrix and $\mathcal{N}$ be a $\delta$-net of the sphere $S^{p-1}$ in the Euclidean space for some $\delta \in [1,0)$. Then
\begin{equation*}
\norm{A} \leq \frac{1}{(1-\delta)^2}\max_{x,y \in \mathcal{N}}(Ax,y).
\end{equation*}
\end{lemma}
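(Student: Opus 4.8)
The plan is to reduce the two-sided bilinear supremum defining $\norm{A}$ to the maximum over net points by peeling off the two arguments one at a time, so that each side contributes a factor $\frac{1}{1-\delta}$ and the product yields the claimed $\frac{1}{(1-\delta)^2}$ (throughout I take $\delta \in [0,1)$, so that $1-\delta > 0$). Since $S^{p-1}$ is compact and $(x,y)\mapsto(Ax,y)$ is continuous, the supremum $\norm{A} = \sup_{x,y \in S^{p-1}}|(Ax,y)|$ is attained; I fix maximizers $x^*, y^*$, and after replacing $x^*$ by $-x^*$ if necessary I may assume $(Ax^*, y^*) = \norm{A} \geq 0$.

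First I would establish a one-sided estimate: for any fixed vector $w$, $\norm{Aw}_2 \leq \frac{1}{1-\delta}\max_{y_0 \in \mathcal{N}}|(Aw, y_0)|$. To see this, note that $\norm{Aw}_2 = \sup_{y \in S^{p-1}}(Aw, y)$ is attained at some $y^\dagger$; choosing a net point $y_0 \in \mathcal{N}$ with $\norm{y^\dagger - y_0}_2 \leq \delta$ and writing $(Aw, y^\dagger) = (Aw, y_0) + (Aw, y^\dagger - y_0)$, the second term is bounded by $\delta \norm{Aw}_2$ through the Cauchy--Schwarz inequality, so rearranging isolates $\norm{Aw}_2$.

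Next I would treat the remaining ($x$) argument. Picking a net point $x_0 \in \mathcal{N}$ with $\norm{x^* - x_0}_2 \leq \delta$ and decomposing $\norm{A} = (Ax^*, y^*) = (Ax_0, y^*) + (A(x^* - x_0), y^*)$, the second summand is at most $\delta \norm{A}$, while the first is at most $\norm{Ax_0}_2$, to which the one-sided estimate of the previous step applies. Combining the two gives $\norm{A} \leq \frac{1}{1-\delta}\max_{x', y' \in \mathcal{N}}|(Ax', y')| + \delta \norm{A}$, and solving this linear inequality for $\norm{A}$ produces the stated bound.

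The step I expect to require the most care is securing the exponent $2$ in $(1-\delta)^{-2}$: a single simultaneous replacement of both $x^*$ and $y^*$ by net points would produce the weaker denominator $1 - 2\delta - \delta^2$, so the two approximations must instead be carried out sequentially, bounding $(Ax_0, y^*)$ by the operator action $\norm{Ax_0}_2$ before discretizing the $y$ side, which is precisely what yields the two independent factors of $(1-\delta)^{-1}$. The only other points to track are the attainment of both suprema by compactness and the sign and absolute-value bookkeeping, so that the net maximum on the right carries absolute values and matches the spectral-norm definition used above.
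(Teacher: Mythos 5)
The paper does not prove this lemma at all --- it is imported verbatim from \cite{eldar2012compressed} --- so there is no internal proof to compare against; your argument is the standard one from that reference and it is correct. The sequential peeling (first bounding $(Ax_0,y^*)$ by $\norm{Ax_0}_2$ and only then discretizing the $y$-variable) is exactly what secures the factor $(1-\delta)^{-2}$, and you correctly silently repair the two typos in the statement as printed: $\delta\in[1,0)$ should read $\delta\in[0,1)$, and the maximum on the right-hand side needs absolute values (or a sign-symmetric net) for the bound to hold as an estimate on $\norm{A}=\sup_{x,y}|(Ax,y)|$.
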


Following \cite{levina2012partial}, we introduce the notion of coordinate-wise sparse \textit{regular vectors}.
\begin{definition} 
The subset of regular vectors of sphere $S^{p-1}$ is defined as
\begin{equation*}
\text{Reg}_p(s) = \{x \in S^{p-1}\mid\text{ all coordinates satisfy: }x_i^2 \in \{0,1/s\}\}, \quad s \in [p]\},
\end{equation*}
\begin{equation*}
\text{Reg}_p = \bigcup_{s \in [p]}\text{Reg}_p(s).
\end{equation*}
\end{definition}
\begin{lemma} \cite{levina2012partial}
\label{norm_bound}
Let $A$ be a $p \times p$ matrix, then
\begin{equation*}
\norm{A} \leq 12 \ceil{\ln{2p}}^2 \max_{x,y \in \text{Reg}_p}(Ax,y).
\end{equation*}
\begin{proof}
The proof can be found in \cite{levina2012partial}, it uses the regular vectors to construct a specific $\delta$-net and obtain the bound given in the statement.
\end{proof}
\end{lemma}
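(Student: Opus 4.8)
The plan is to fix two arbitrary unit vectors $x,y \in S^{p-1}$ and to bound the single bilinear form $(Ax,y)$ by $12\ceil{\ln 2p}^2\max_{u,v\in\text{Reg}_p}(Au,v)$; since $\norm{A}=\sup_{x,y\in S^{p-1}}(Ax,y)$, taking the supremum at the end delivers the lemma. The governing idea is that although a unit vector may have up to $p$ nonzero coordinates of very different magnitudes, these coordinates can be sorted into only $O(\ln p)$ groups of comparable size, and on each group the restriction of $x$ looks, up to a controlled relative error, like a rescaled regular vector.

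Concretely, I would split the coordinates of $x$ into geometric level sets $T_k=\{i : e^{-(k+1)} < x_i^2 \le e^{-k}\}$, collecting all coordinates below a threshold of order $1/p$ into a single tail. Because there are at most $p$ coordinates, the squared mass of this tail is a controllable constant, so the tail contributes only an error vector of small Euclidean norm. The surviving levels are those with $e^{-k}$ above the threshold, of which there are at most $\ceil{\ln 2p}$ — this is the origin of the logarithmic factor. On each level $T_k$ all coordinates agree in magnitude up to the fixed factor $\sqrt{e}$, so $x|_{T_k}=\lambda_k r_k + e_k$, where $r_k\in\text{Reg}_p(|T_k|)$ carries the signs of $x$ on $T_k$, the scalar $\lambda_k\ge0$ satisfies $\lambda_k\le\norm{x|_{T_k}}_2\le 1$, and the residual $e_k$ has a strictly-less-than-one relative size. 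Since the $T_k$ have disjoint supports, the full residual $\tilde x=\sum_k e_k$ has $\norm{\tilde x}_2^2=\sum_k\norm{e_k}_2^2$ bounded by a fixed fraction of $1$. I would run the same decomposition on $y$, producing regular vectors $q_l$ with weights $\mu_l\le1$ on at most $\ceil{\ln 2p}$ levels, plus a small residual $\tilde y$.

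Expanding the form as $(Ax,y)=\sum_{k,l}\lambda_k\mu_l\,(Ar_k,q_l) + (\text{error})$, I would bound each clean term by $\lambda_k\mu_l\max_{u,v\in\text{Reg}_p}(Au,v)$ and sum, using the crude estimates $\sum_k\lambda_k\le\ceil{\ln 2p}$ and $\sum_l\mu_l\le\ceil{\ln 2p}$ (number of levels times the per-level bound $\le1$). Their product is exactly where the squared logarithm $\ceil{\ln 2p}^2$ appears. The error term gathers the per-level rounding discrepancies together with the dropped tails, and every piece is itself a bilinear form $(A\tilde x,\tilde y)$ in vectors of Euclidean norm bounded by a fixed fraction, hence is at most that fraction times $\norm{A}$. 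Taking the supremum over $x,y$ on both sides converts the estimate into
\begin{equation*}
\norm{A}\le C\ceil{\ln 2p}^2\max_{u,v\in\text{Reg}_p}(Au,v) + \rho\,\norm{A},\qquad \rho<1,
\end{equation*}
and absorbing the $\rho\norm{A}$ term into the left-hand side eliminates the error entirely, giving $\norm{A}\le \frac{C}{1-\rho}\ceil{\ln 2p}^2\max_{u,v\in\text{Reg}_p}(Au,v)$.

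I expect the real difficulty to lie precisely in this last balancing act. The geometric ratio $e$, the tail threshold, and the rounding rule must be chosen jointly so that the combined residual is a genuine strict contraction of $\norm{A}$ (so $\rho<1$ and the absorption is legitimate), while simultaneously keeping the number of surviving levels at $\ceil{\ln 2p}$ and the leading constant $C/(1-\rho)$ no larger than $12$. Tracking these constants carefully is the main obstacle; the $\delta$-net discretization bound stated above can serve as an alternative device, converting the regular-vector decomposition into an explicit net of the sphere and controlling the same residual, which is essentially the route taken in \cite{levina2012partial}.
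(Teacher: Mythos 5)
The paper itself gives no proof of this lemma --- it only cites \cite{levina2012partial} --- so the only comparison available is against the general strategy it names (regular vectors used to build a net/decomposition of the sphere). Your level-set plan is the right kind of argument, but as written it has a genuine gap at the absorption step, and the gap is structural rather than a matter of ``tracking constants.'' With your parameter choices the residual is not a contraction. The tail alone is already fatal: if the threshold for $x_i^2$ is of order $1/p$ (which is forced once you insist on only $\ceil{\ln 2p}$ surviving levels with ratio $e$), the discarded coordinates can number up to $p$ and each carry squared mass up to $c/p$, so the tail of $x$ can have Euclidean norm of order $1/\sqrt{2}\approx 0.71$; it is not a small ``controllable constant.'' Adding the per-level rounding error (relative size up to $1-e^{-1/2}\approx 0.39$ for ratio $e$) gives a residual $\tilde x$ with $\norm{\tilde x}_2\approx 0.81$. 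Since the error in the bilinear form is $(A\tilde x,y)+(Ax_m,\tilde y)$ with $\norm{x_m}_2\le 1+\rho$, the coefficient to be absorbed is $\rho(2+\rho)\approx 2.3>1$, and the inequality $\norm{A}\le C\ceil{\ln 2p}^2 M+\rho'\norm{A}$ cannot be closed. Even if the tail were free, $\rho\approx 0.39$ gives $\rho(2+\rho)\approx 0.94$ and a prefactor $1/(1-0.94)\approx 17$ multiplying $C$, far beyond the stated constant $12$.

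The standard repair --- and, as far as one can tell from the citation, the route of \cite{levina2012partial} --- avoids absorption altogether by decomposing each unit vector \emph{exactly} as a nonnegative combination of regular vectors. On a level set $T$ where the coordinates of $|x|$ lie within a fixed ratio of one another, greedily peel off multiples of the regular vector supported on the current support (each step zeroes at least one coordinate); the peeled coefficients telescope to at most a constant times $\norm{x|_T}_2$. The tail must be peeled in the same way rather than dropped: all its coordinates are at most $1/\sqrt{2p}$, so peeling against elements of $\text{Reg}_p(p)$ costs total coefficient at most $1/\sqrt{2}$. This exhibits $S^{p-1}\subset c\,\ceil{\ln 2p}\cdot\mathrm{conv}\(\text{Reg}_p\cup-\text{Reg}_p\)$ with no residual, and the lemma follows from bilinearity of $(Ax,y)$ with constant $c^2\ceil{\ln 2p}^2$. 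If you insist on the approximate-decomposition-plus-absorption architecture, you must take the geometric ratio close to $1$ and the tail threshold well below $1/p$, which inflates the number of levels and loses the specific constant $12\ceil{\ln 2p}^2$.
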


\section{Proof of Theorems \ref{main_th_1}, \ref{main_th_2}}
We partition the proof into a few sections.
\subsection{Decoupling and Conditioning}
Using Remark \ref{rem} we can rescale the random vectors and assume without loss of generality that $\Theta=I$. Now by Lemma \ref{decoup_bound} it suffices to estimate $\mathbb{E}\norm{W'}$. From Lemma \ref{norm_bound} we get that
\begin{equation}
\mathbb{P}(\norm{W'} \geq t) \leq \mathbb{P}(12 \ceil{\ln{2p}}^2 \max_{x,y \in \text{Reg}_p}(W'x,y) \geq t).
\label{norm_bo}
\end{equation}
Write the inner product coordinate-wise and rearrange the summands to obtain
\begin{align}
(W' x,y) = \frac{1}{n}\sum_{l=1}^n \sum_{i=1}^p \[\sum_{m=1}^n \sum_{j=1}^p b_{lm}x_j X_{jm}\] y_i X'_{il}.
\label{gaus_expr}
\end{align}
We now fix $x$ and $y$ and condition on the variables $X_{jm}, j=1,\dots,p,m=1,\dots,n$ so that the expression (\ref{gaus_expr}) defines a centered normal random variable. We wish to estimate its standard deviation with the help of Lemma \ref{st_dev}. Since we have assumed $\Theta=I$, the covariance matrix of the concatenated vector $\vec{X'}$ is also the identity matrix. Then Lemma \ref{st_dev} implies that (\ref{gaus_expr}) is centered normal with standard deviation at most $\sigma_x(X)\norm{y}_{\infty}$, where
\begin{equation*}
\sigma_x(X) = \frac{1}{n} \(\sum_{l=1}^n \sum_{i=1}^p \[\sum_{m=1}^n \sum_{j=1}^p b_{lm}x_j X_{jm}\]^2\)^{1/2} = \frac{\sqrt{p}}{n} \(\sum_{l=1}^n \[\sum_{m=1}^n \sum_{j=1}^p b_{lm}x_j X_{jm}\]^2\)^{1/2}.
\end{equation*}
We need to bound this quantity uniformly with respect to all $x$.

\subsection{Concentration} 
Let $x \in \text{Reg}_p$, we estimate $\sigma_x(X)$ using concentration in Gauss space, Lemma \ref{conc_gauss_space}. Due to Jensen's inequality
\begin{align*}
&\mathbb{E}\sigma_x(X) \leq (\mathbb{E}\sigma_x(X)^2)^{1/2} = \frac{\sqrt{p}}{n} \(\sum_{l=1}^n \mathbb{E} \[\sum_{m=1}^n \sum_{j=1}^p b_{lm}x_j X_{jm}\]^2\)^{1/2} \nonumber \\
&= \frac{\sqrt{p}}{n} \(\sum_{l=1}^n \sum_{m=1}^n \sum_{j=1}^p b_{lm}^2x_j^2\)^{1/2} = \frac{\sqrt{p}}{n} \(\sum_{l=1}^n \sum_{m=1}^n b_{lm}^2 \sum_{j=1}^p x_j^2\)^{1/2} = \frac{\sqrt{p}}{n} \norm{B}_{\text{Frob}}.
\end{align*}
Now we consider $\sigma_x: \mathbb{R}^{pn} \to \mathbb{R}$ as a function of the concatenated vector $\vec{X}$ as we did before. We compute the Lipschitz constant with respect to the Euclidean measure on $\mathbb{R}^{pn}$, note that the Euclidean norm on this space coincides with Frobenius norm on the linear space of $p \times n$ matrices. 
\begin{align*}
&\sigma_x(X) = \frac{\sqrt{p}}{n} \(\sum_{l=1}^n \[\sum_{m=1}^n \sum_{j=1}^p b_{lm}x_j X_{jm}\]^2\)^{1/2} = \frac{\sqrt{p}}{n} \(\sum_{l=1}^n \[\sum_{j=1}^p x_j \sum_{m=1}^n b_{lm} X_{jm}\]^2\)^{1/2} \\
&= \frac{\sqrt{p}}{n}\norm{BX^Tx}_2 \leq  \frac{\sqrt{p}}{n} \norm{B} \norm{X} \leq \frac{\sqrt{p}}{n} \norm{B} \norm{X}_{\text{Frob}},
\end{align*}
for the Lipschitz constant
\begin{equation*}
\norm{\sigma_x(X)}_{\text{Lip}} \leq  \frac{\sqrt{p}\norm{B}}{n}.
\end{equation*}
Lemma \ref{conc_gauss_space} now implies that $\forall x \in \text{Reg}_p$ and $t \geq 0$
\begin{equation}
\mathbb{P}\(\sigma_x(X) \geq \frac{\sqrt{p}}{n} \norm{B}_{\text{Frob}} + t\) \leq \frac{1}{2}\exp\(-\frac{t^2n^2}{2p\norm{B}^2}\).
\label{init_bound}
\end{equation}

\subsection{Union bounds}
We return to the estimation of the random variable $(W'x,y)$. Let us fix $u \geq 1$, then $\forall x\in \text{Reg}_p$ we consider the event
\begin{equation*}
\mathcal{E}_x = \left\{\sigma_x(X) \leq \frac{\sqrt{p}}{n} \norm{B}_{\text{Frob}} + u\frac{\sqrt{p}\norm{B}}{n}\right\}.
\end{equation*}
By (\ref{init_bound}) we have
\begin{equation}
\mathbb{P}(\mathcal{E}_x) \geq 1-\frac{1}{2}\exp(-u^2/2).
\label{prob_est}
\end{equation}
Note that $\sigma_x(X)$ and, thus $\mathcal{E}_x$, are independent of $X'$. Let now $x \in \text{Reg}_p(r)$ and $y \in \text{Reg}_p(s)$. As we have observed above, conditioned
on a realization of $X$ satisfying $\mathcal{E}_x$, the random variable $(W'x,y)$ is distributed identically with a centered normal random variable $h$ whose standard deviation is bounded by
\begin{equation*}
\sigma_x(X)\norm{y}_{\infty} \leq \frac{\sqrt{p}}{\sqrt{s}n} \norm{B}_{\text{Frob}} + u\frac{\sqrt{p}\norm{B}}{\sqrt{s}n} =: \sigma.
\end{equation*}
Then by the usual tail estimate for Gaussian random variables, we have
\begin{equation*}
\mathbb{P}((W'x,y) \geq \varepsilon \mid \mathcal{E}_x) \leq \frac{1}{2}\exp(-\varepsilon^2/2\sigma^2).
\end{equation*}
Choose $\varepsilon = u \sigma$ to obtain
\begin{equation*}
\mathbb{P}((W'x,y) \geq \varepsilon \mid \mathcal{E}_x) \leq \frac{1}{2}\exp(-u^2/2), \quad \forall x \in \text{Reg}_p(r), y \in \text{Reg}_p(s).
\end{equation*}
We would like to take the union bound in this estimate over all $y \in \text{Reg}_p(s)$ for a fixed $s$. Note that
\begin{equation}
|\text{Reg}_p(s)| = {p \choose s} 2^s \leq \exp(s \ln{(2ep/s)}),
\label{num_p_b}
\end{equation}
as there are exactly ${p \choose s}$ possibilities to choose the support and $2^s$ ways to choose the signs of the coefficients of a vector in $\text{Reg}_p(s)$, thus
\begin{equation*}
\mathbb{P}\(\max_{y \in \text{Reg}_p(s)}(W'x,y) \geq \varepsilon \mid \mathcal{E}_x\) \leq \frac{1}{2}\exp(s \ln{(2ep/s)}-u^2/2),
\end{equation*}
in order for this bound to be not trivial we assume $u \geq \sqrt{2s \ln{(2ep/s)}}$. Now, using (\ref{prob_est}), we obtain
\begin{align}
&\mathbb{P}\(\max_{y \in \text{Reg}_p(s)}(W'x,y) \geq \varepsilon\) \leq \mathbb{P}\(\max_{y \in \text{Reg}_p(s)}(W'x,y) \geq \varepsilon \mid \mathcal{E}_x\) + \mathbb{P}(\mathcal{E}_x^c) \nonumber \\
&\leq \frac{1}{2}\exp(s \ln{(2ep/s)}-u^2/2) + \frac{1}{2}\exp(-u^2/2) \leq \exp(s \ln{(2ep/s)}-u^2/2).
\label{x_bound}
\end{align}
\subsection{Gathering the bounds}
We continue with formula (\ref{norm_bo}):
\begin{align*}
\mathbb{P}(\norm{W'} \geq t) \leq \mathbb{P}(12 \ceil{\ln{2p}}^2 \max_{x, y \in \text{Reg}_p}(W'x,y) \geq t)  \leq \mathbb{P}(12 \ceil{\ln{2p}}^2 \max_{\substack{r, s \in [p]}} \max_{\substack{x \in \text{Reg}_p(r)\\ y \in \text{Reg}_p(s)}}(W'x,y) \geq t).
\end{align*}
With the help of (\ref{x_bound}) and the bound (\ref{num_p_b}) on the number of points in $\text{Reg}_p(r)$ we obtain
\begin{align*}
\mathbb{P}(\max_{\substack{x \in \text{Reg}_p(r)\\ y \in \text{Reg}_p(s)}}(W'x,y) \geq \varepsilon) \leq \exp(r \ln{(2ep/r)}+s \ln{(2ep/s)}-u^2/2),
\end{align*}
for $u \geq \sqrt{2s \ln{(2ep/s)}}$. 

The function $x \ln{(2ep/x)}$ increases monotonically on the interval $[1,p]$, so $k \ln{(2ep/k)} \leq p \ln{(2ep/p)} = p \ln{(2e)} \leq 2p, \forall k\leq p$. Choose $u \geq 3\sqrt{p}$ to get the bound
\begin{equation*}
\mathbb{P}(\max_{r,s,\in[p]}\max_{\substack{x \in \text{Reg}_p(r)\\ y \in \text{Reg}_p(s)}}(W'x,y) \geq \varepsilon) \leq \exp(-u^2/4).
\end{equation*}
Finally, replace $t$ with $12 \ceil{\ln{2p}}^2 \varepsilon$ to obtain
\begin{align}
\mathbb{P}(\norm{W'} \geq 12 \ceil{\ln{2p}}^2 \varepsilon) \leq\exp(-u^2/4),
\label{pro_b}
\end{align}
where 
\begin{equation}
\epsilon = u\frac{\sqrt{p}}{n} \norm{B}_{\text{Frob}} + u^2\frac{\sqrt{p}\norm{B}}{n}, \quad u \geq 3 \sqrt{p}.
\label{u_eq}
\end{equation}

Integration of (\ref{pro_b}) yields
\begin{equation*}
\mathbb{E}\norm{W'} \leq \frac{12 \ceil{\ln{2p}}^2 \sqrt{p}(4\sigma+\kappa\sqrt{\pi})}{n}.
\end{equation*}
By Lemma \ref{decoup_bound} we obtain
\begin{equation*}
\mathbb{E}\norm{W - W^0} \leq \frac{24 \ceil{\ln{2p}}^2 \sqrt{p}(4\sigma+\kappa\sqrt{\pi})}{n}.
\end{equation*}
Now multiply $W$ by $\Theta^{1/2}$ from left and right to scale the matrices and get the statement of Theorem \ref{main_th_1}.

For Corollary \ref{main_th_2} assume we are given a sequence $\{B_n\}_{n\in\mathcal{S}} \in \mathbb{R}^{n \times n}$such that $\Tr{B_n} = n$, then for every corresponding Wishart matrix
\begin{equation*}
\mathbb{E}\norm{W - W^0} \leq \frac{24 \ceil{\ln{2p}}^2 \sqrt{p}(4\sigma_n+\kappa_n\sqrt{\pi})}{n}.
\end{equation*}
By bounding the values $\kappa_n \leq \kappa$ and $\sigma_n \leq \sigma$ from above we get the desired inequality
\begin{equation*}
\mathbb{E}\norm{W - W^0} \leq \frac{24 \ceil{\ln{2p}}^2 \sqrt{p}(4\sigma+\kappa\sqrt{\pi})}{n}.
\end{equation*}

\section{Acknowledgment}
The author is grateful to Dmitry Trushin for discussions of the proof and useful suggestions.
\bibliographystyle{unsrt}
\bibliography{ilya_bib}

\begin{thebibliography}{10}

\bibitem{wishart1928generalised}
J.~Wishart.
\newblock The generalised product moment distribution in samples from a normal
  multivariate population.
\newblock {\em Biometrika}, 20(1/2):32--52, 1928.

\bibitem{anderson1958introduction}
T.~W. Anderson.
\newblock An introduction to multivariate statistical analysis.
\newblock 2, 1958.

\bibitem{muirhead2009aspects}
R.~J. Muirhead.
\newblock Aspects of multivariate statistical theory.
\newblock 197, 2009.

\bibitem{speicher1998combinatorial}
R.~Speicher.
\newblock Combinatorial theory of the free product with amalgamation and
  operator-valued free probability theory.
\newblock 627, 1998.

\bibitem{burda2011applying}
Z.~Burda, A.~Jarosz, M.~A. Nowak, J.~Jurkiewicz, G.~Papp, and I.~Zahed.
\newblock Applying free random variables to random matrix analysis of financial
  data. {P}art {I}: {T}he {G}aussian case.
\newblock {\em Quantitative Finance}, 11(7):1103--1124, 2011.

\bibitem{chuah2002capacity}
C.-N. Chuah, D.~N.~C. Tse, and R.~A. Kahn, J. M.and~Valenzuela.
\newblock Capacity scaling in {MIMO} wireless systems under correlated fading.
\newblock {\em IEEE Transactions on Information Theory}, 48(3):637--650, 2002.

\bibitem{collins2013compound}
B.~Collins, D.~McDonald, and N.~Saad.
\newblock Compound {W}ishart matrices and noisy covariance matrices: Risk
  underestimation.
\newblock {\em arXiv preprint arXiv:1306.5510}, 2013.

\bibitem{shah2012group}
P.~Shah and V.~Chandrasekaran.
\newblock Group symmetry and covariance regularization.
\newblock {\em 46th Annual Conference on Information Sciences and Systems
  (CISS)}, pages 1--6, 2012.

\bibitem{bryc2008compound}
W.~Bryc.
\newblock Compound real {W}ishart and q-{W}ishart matrices.
\newblock {\em arXiv preprint arXiv:0806.4014}, 2008.

\bibitem{tropp2012user}
J.~A. Tropp.
\newblock User-friendly tail bounds for sums of random matrices.
\newblock {\em Foundations of Computational Mathematics}, 12(4):389--434, 2012.

\bibitem{mackey2012matrix}
L.~Mackey, M.~I. Jordan, R.~Y. Chen, B.~Farrell, and J.~A. Tropp.
\newblock Matrix concentration inequalities via the method of exchangeable
  pairs.
\newblock {\em arXiv preprint arXiv:1201.6002}, 2012.

\bibitem{vershynin2012close}
R.~Vershynin.
\newblock How close is the sample covariance matrix to the actual covariance
  matrix?
\newblock {\em Journal of Theoretical Probability}, 25(3):655--686, 2012.

\bibitem{srivastava2011covariance}
N.~Srivastava and R.~Vershynin.
\newblock Covariance estimation for distributions with 2+$\varepsilon$ moments.
\newblock {\em arXiv preprint arXiv:1106.2775}, 2011.

\bibitem{eldar2012compressed}
R.~Vershynin.
\newblock Introduction to the non-asymptotic analysis of random matrices.
\newblock {\em Compressed sensing: theory and applications, edited by Y.C.
  Eldar and G. Kutyniok}, 2012.

\bibitem{bickel2008covariance}
P.~J. Bickel and E.~Levina.
\newblock Covariance regularization by thresholding.
\newblock {\em The Annals of Statistics}, 36(6):2577--2604, 2008.

\bibitem{bickel2008regularized}
P.~J. Bickel and E.~Levina.
\newblock Regularized estimation of large covariance matrices.
\newblock {\em The Annals of Statistics}, pages 199--227, 2008.

\bibitem{rothman2009generalized}
A.~J. Rothman, E.~Levina, and J.~Zhu.
\newblock Generalized thresholding of large covariance matrices.
\newblock {\em Journal of the American Statistical Association},
  104(485):177--186, 2009.

\bibitem{cai2010optimal}
T.~T. Cai, C.-H. Zhang, and H.~H. Zhou.
\newblock Optimal rates of convergence for covariance matrix estimation.
\newblock {\em The Annals of Statistics}, 38(4):2118--2144, 2010.

\bibitem{levina2012partial}
E.~Levina and R.~Vershynin.
\newblock Partial estimation of covariance matrices.
\newblock {\em Probability Theory and Related Fields}, 153(3-4):405--419, 2012.

\bibitem{chen2012masked}
R.~Y. Chen, A.~Gittens, and J.~A. Tropp.
\newblock The masked sample covariance estimator: an analysis via the matrix
  {L}aplace transform.
\newblock {\em DTIC Document, technical report}, 2012.

\bibitem{ledoux1991probability}
M.~Ledoux and M.~Talagrand.
\newblock Probability in banach spaces: isoperimetry and processes.
\newblock 23, 1991.

\end{thebibliography}

\end{document}